\newcounter{thMM}[section]
\newcounter{leMM}[section]
\newcounter{deFF}[section]
\newcounter{exMP}[section]
\newcounter{prOP}[section]
\newcounter{coRR}[section]
\newcounter{coexMP}[section]
\newenvironment{theorem}[1][Theorem]{\refstepcounter{thMM}\trivlist
   \item[{\bf #1~\thesection.\arabic{thMM}.}]\it\hskip3pt}{\endtrivlist}
\newenvironment{definition}[1][Definition]{\refstepcounter{deFF}\trivlist
   \item[{\bf #1~\thesection.\arabic{deFF}.}]\rm\hskip3pt}{\endtrivlist}
\newenvironment{proposition}[1][Proposition]{\refstepcounter{prOP}\trivlist
   \item[{\bf #1~\thesection.\arabic{prOP}.}]\it\hskip3pt}{\endtrivlist}
\newenvironment{corollary}[1][Corollary]{\refstepcounter{coRR}\trivlist
   \item[{\bf #1~\thesection.\arabic{coRR}.}]\it\hskip3pt}{\endtrivlist}
\newenvironment{proof}[1][Proof]{\begin{trivlist}
\item[\hskip \labelsep {\bfseries #1}] }{ \begin{flushright}$\square$\end{flushright}\end{trivlist}}
\newenvironment{remark}[1][Remark]{\begin{trivlist}
\item[\hskip \labelsep {\bfseries #1}]}{\end{trivlist}}
\newcommand{\D}{\mathcal{D}}
\DeclareMathOperator{\Vect}{Vect}
\DeclareMathOperator{\Span}{Span}
\begin{document}
\author{Andrew James Bruce \\  \small{Pembrokeshire College,}\\
\small{Haverfordwest,}\\\small{ Pembrokeshire, SA61 1SZ, UK}
\\ \small{\emph{email:} \texttt{andrewjamesbruce@googlemail.com}  } }
\date{\today}
\title{Higher contact-like structures and supersymmetry}
\maketitle

\begin{abstract}
We establish a relation between   higher contact-like structures on supermanifolds and the  $\mathcal{N}=1$  super-Poincar\'e group via its superspace realisation. To do this we introduce a vector-valued contact structure, which we  refer to as a polycontact structure.
\end{abstract}

\noindent \small{\textbf{Keywords:} Supersymmetry, Supermanifolds, Polycontact Geometry.}\\
\noindent \small{\textbf{MSC 2010:} 53D10; 53Z05; 58A10; 58A50; 81T60.}
\section{Introduction}\label{sec:intro}

In this paper we link $\mathcal{N}=1$ supersymmetry, as formulated in superspace, with  a higher order version of contact geometry. We will show that one needs to pass from one-forms to vector-valued  one-forms in order to construct a contact-like form on  $(4|4)$ dimensional flat or rigid  superspace. In analogy with  G$\ddot{\textnormal{u}}$nther's polysymplectic geometry  \cite{Gunther1987}  we will call  ``vector-valued contact forms"  \emph{polycontact forms}.  The link between contact structures supersymmetry originates with the work of Manin \cite{Manin1991} and  was explored by Schwarz \cite{Schwarz1992} and his collaborators in relation with superconformal field theories. We revisited this relation between contact structures on supermanifolds and the    $d=1$, $N=2$ super-Poincar\'e algebra in \cite{Bruce2011}. In particular we  emphasised  how the contact structure arises from more standard coset space methods.   \\

With the relation between SUSY and contact structures being our primary goal here, let us present a lightning review of classical contact structures on manifolds  highlighting the elements we need later. Recall that a precontact structure on a manifold is a one-form that is nowhere vanishing\footnote{We assume all structures to be global and skip questions of orientability.}. Associated with every precontact structure on a manifold is a hyperplane distribution, that is  a subbundle of the tangent bundle of corank $1$. The hyperplane distribution is defined to be  the kernel of the precontact structure.  That is if we denote the precontact structure as $\alpha \in \Omega^{1}(M)$ then $\D_{\alpha} = \ker{\alpha}$. That is the hyperplane distribution consists of all vector fields $X \in \Vect(M)$ such that $i_{X}\alpha =0$.\\

A contact structure on a manifold of dimension $(2n +1)$ ($n \in \mathbbmss{N}^{*})$ is a precontact structure with the extra requirement that the  exterior derivative is a two-form that is  non-degenerate on the hyperplane distribution defined by the structure. That is there are no non-zero vector fields $X \in \D_{\alpha}$ such that $i_{X}(d \alpha)=0$.\\

Let $(M, \alpha)$ be a contact manifold. A diffeomorphism $\phi: M \rightarrow M$  is said to be a \emph{contactomprohism} if and only if $\phi^{*}\alpha = f \alpha$ for some nowhere vanishing function $f \in C^{\infty}(M)$. A diffeomorphism $\phi: M \rightarrow M$  is said to be a \emph{strict contactomorphism} if and only if $\phi^{*}\alpha = \alpha$.  Contactomorphisms preserve hyperplane distributions.  A vector field $X \in \Vect(M)$ is said to be a \emph{contact vector field} if and only of $L_{X}\alpha = f \alpha$. If $f = 0$, then then vector field is said to be a \emph{strict contact vector field}.\\

We do not employ anything deep from the general theory of contact structures and so direct the reader that is unfamiliar with contact structures to introductory texts. For example see Appendix 4 of Arnold's book \cite{Arnold1989} or Etnyre's lectures \cite{Etnyre2003}. Grabowski \cite{Grabowski2011} provides a  good description of contact structures on supermanifolds and their idiocrasies. Also note the  work of Mehta on graded contact structures \cite{Mehta2011}.  \\

 G$\ddot{\textnormal{u}}$nther \cite{Gunther1987} initiated the  study of polysymplectic structures, understood as a natural generlisation of symplectic geometry in which symplectic forms are replaced with vector-valued two forms. The motivation was to understand a Hamiltonian formulation of classical field theories.  In  G$\ddot{\textnormal{u}}$nther's initial work the differential forms are vector field valued, that is the vector bundle in question is the tangent bundle, though this can be generalised to arbitrary vector bundles. It may be interesting to consider  polysymplectic structures on vector bundles with a little extra structure such as Lie algebroids or skew algebroids.\\

In this work we develop a vector-valued version of contact geometry in order to satisfy our needs: \emph{that is to formulate a novel geometric  view  of $\mathcal{N}=1$ supersymmetry}. It is expected that  further refinements  will be required to understand the general theory of polycontact geometry on supermanifolds.  The subtleties and idiosyncrasies of working on supermanifolds, or indeed graded supermanifolds would require careful handling in developing the notion of polycontact structures. We make no attempt to be anything like general in this work. For example, we make no claim on the existence of a Darboux-like theorem in general. Indeed, for multisymplectic and polysymeplectic structures the existence of a Darboux-like coordinates is not guaranteed in general. It is rather an open question to mathematically formulate precisely a higher order version of symplectic geometry suitable for physical applications.  \\

We will make heavy use of local coordinates in order to keep this work very explicit and accessible. Informally one should understand supermanifolds as ``manifolds" with both commuting and anticommuting coordinates. By using local coordinates one avoids the language of sheaf theory and locally ringed spaces.   For further details  about supermanifolds  we recommend \cite{Manin1997,Varadarajan2004,Voronov1992}. \\

This paper is arranged as follows. In \S(\ref{sec:N1}) we present the basics of $\mathcal{N}=1$ supersymmetry as required throughout this paper. The main results of this paper are in \S(\ref{sec:polycontact}) where we define and explore the notion of polycontact geometry applied to supersymmetry. The procedure of ``polysymplectization" is given in \S(\ref{sec:polysymplectic}). We end this paper with a few concluding remarks in \S(\ref{sec:conclusion}). A brief appendix on the basics of vector-valued differential forms on supermanifolds is included.

\begin{remark}
While this paper was being compiled the author was made aware of the work of E. van Erp \cite{Erp2011} who defines the notion of a polycontact structure on a classical manifold in way very similar to the notion employed here.  Importantly van Erp shows how his notion of a polycontact structure is equivalent to the existence of a generalised Szeg\"{o} projection in the Heisenberg algebra  of pseudodifferential operators. It would be very interesting to generalise this result to supermanifolds.
\end{remark}

\section{$\mathcal{N}=1$ supersymmetry in superspace}\label{sec:N1}

We will assume that the reader is familiar with the basic ideas of supersymmetric field theory and in particular the notion of superspace. There are many excellent reviews of the subject, see for example \cite{Wess1992,West1990}.  Consider the superspace $\mathbbmss{R}^{4|4}$ equipped with local coordinates $(x^{\mu}, \theta^{a}, \bar{\theta}_{\dot{a}})$. Here the coordinate $(x^{\mu})$ are Grassmann even (commuting) as where the coordinates $(\theta^{a}, \bar{\theta}_{\dot{a}})$ are Grassmann odd (anticommuting). The SUSY transformations are \emph{defined} in this superspace to be

\begin{eqnarray}\label{n2susytransform}
x^{\mu} &\rightarrow& x'^{\mu} = x^{\mu}+ i \left(\epsilon^{a}(\sigma^{\mu})_{a}^{\:\: \dot{b}} \bar{\theta}_{\dot{b}} - \theta^{a}(\sigma^{\mu})_{a}^{\:\: \dot{b}} \bar{\epsilon}_{\dot{b}}  \right),\\
\nonumber \theta^{a} &\rightarrow& \theta'^{a} = \theta^{a}+ \epsilon^{a},\\
\nonumber \bar{\theta}_{\dot{a}} &\rightarrow& \bar{\theta}'_{\dot{a}} = \bar{\theta}_{\dot{a}}+ \bar{\epsilon}_{\dot{a}},
\end{eqnarray}

where $\epsilon^{a}$ and $\bar{\epsilon}_{\dot{a}}$ are real Grassmann odd parameters. Here $\sigma^{\mu} := (\mathbbmss{1}, \sigma^{i})$, where $\sigma^{i}$ are the Pauli spin  matrices. The factor of $i = \sqrt{-1}$ is included to ensure that the product of two real Grassmann odd objects is real. Thus, the real nature of space-time is maintained.\\

Thinking of the SUSY transformations as a change in local coordinates these naturally induce transformations of the differentials, as well as the partial derivatives. Let us equip $\Pi T(\mathbbmss{R}^{4|4})$ with natural fibre coordinates $(dx^{\mu}, d\theta^{a}, d\bar{\theta}_{\dot{a}})$. Here $dx$ is Grassmann odd, where $d\theta$ and $d\bar{\theta}$ are Grassmann even. Explicitly the SUSY transformations induce vector bundle automorphisms as

\begin{eqnarray}
dx'^{\mu} &=& dx^{\mu} - i \left(d\theta^{a}(\sigma^{\mu})_{a}^{\:\: \dot{b}}\bar{\epsilon}_{\dot{b}} + \epsilon^{a}(\sigma^{\mu})_{a}^{\:\: \dot{b}}d\bar{\theta}_{\dot{b}}   \right),\\
\nonumber  d\theta'^{a} &=& d \theta^{a},\\
\nonumber d \bar{\theta}'_{\dot{a}} &=& d \bar{\theta}_{\dot{a}}.
\end{eqnarray}

The partial derivatives (or one could think about fibre coordinates on the tangent bundle) transform as

\begin{eqnarray}
\frac{\partial}{\partial x'^{\mu}} &=& \frac{\partial}{\partial x^{\mu}},\\
\nonumber \frac{\partial}{\partial \theta'^{a}} & = & \frac{\partial}{\partial \theta^{a}} + i (\sigma^{\mu})_{a}^{\:\: \dot{b}}\bar{\epsilon}_{\dot{b}} \frac{\partial}{\partial x^{\mu}},\\
\nonumber \frac{\partial}{\partial \bar{\theta}'_{\dot{a}}} &=& \frac{\partial}{\partial \bar{\theta}_{\dot{a}}} + i \epsilon^{b}(\sigma^{\mu})_{b}^{\:\: \dot{a}}\frac{\partial}{\partial x^{\mu}}.
\end{eqnarray}

Let us  as standard introduce the two vector fields

\begin{equation}
Q_{a} =  \frac{\partial}{\partial \theta^{a}} + i (\sigma^{\mu})_{a}^{\:\: \dot{b}} \bar{\theta}_{\dot{b}}  \frac{\partial}{\partial x^{\mu}} \hspace{15pt}\textnormal{and}\hspace{15pt}\bar{Q}^{\dot{a}} = \frac{\partial}{\partial \bar{\theta}_{\dot{a}}} + i \theta^{b} (\sigma^{\mu})_{b}^{\:\: \dot{a}} \frac{\partial}{\partial x^{\mu}},
\end{equation}

as the vector fields that ``implement" or ``generate" the SUSY transformations viz

\begin{equation}\nonumber
\delta \Phi =\left( \epsilon Q + \bar{\epsilon} \bar{Q}\right)[\Phi] =  \delta x \frac{\partial \Phi}{\partial x} + \delta \theta \frac{\partial \Phi}{\partial \theta} + \delta \bar{\theta} \frac{\partial \Phi}{ \partial \bar{\theta}},
\end{equation}

for any arbitrary superfield $\Phi \in C^{\infty}(\mathbbmss{R}^{4|4})$. It is also easy to show that

\begin{equation}
[Q_{a},\bar{Q}^{\dot{b}}] = 2 i(\sigma^{\mu})_{a}^{\:\: \dot{b}} \frac{\partial}{\partial x^{\mu}},
\end{equation}

and that all other commutators involving $Q, \bar{Q}$ and $\frac{\partial}{\partial x}$ are identically zero. \\

Also fundamental to supersymmetry are  the SUSY covariant derivatives

\begin{equation}
\mathbbmss{D}_{a} =  \frac{\partial}{\partial \theta^{a}} - i (\sigma^{\mu})_{a}^{\:\: \dot{b}}\bar{\theta}_{\dot{b}} \frac{\partial }{\partial x^{\mu}}\hspace{15pt}\textnormal{and}\hspace{15pt}
\bar{\mathbbmss{D}}^{\dot{a}} = \frac{\partial}{\partial \bar{\theta}_{\dot{a}}} - i \theta^{b} (\sigma^{\mu})_{b}^{\:\: \dot{a}} \frac{\partial }{\partial x^{\mu}},
\end{equation}

which can be introduced to compensate for the fact that $\frac{\partial \Phi}{\partial \theta}$ and $\frac{\partial \Phi}{\partial \overline{\theta}}$ do not transform as superfields. This is clear from the transformation rules of the partial derivatives.\\

Up to this point (modulo some conventions) everything has been rather standard and can be found in many textbooks, for example \cite{Wess1992,West1990}. From now on we will employ a more \emph{geometric} point of view in terms of distributions. To the authors knowledge understanding supersymmetry in terms of distributions can be attributed to Manin, at least for supermanifolds of dimension $(1|1)$ and $(1|2)$.

\begin{definition}
The \textbf{SUSY structure} on $\mathbbmss{R}^{4|4}$ is the non-integrable distribution $\D \subset T(\mathbbmss{R}^{4|4})$ of corank $(4|0)$ that can be  spanned by the SUSY covariant derivatives.
\begin{equation}
\D = \Span\left\{ \mathbbmss{D}_{a}, \hspace{5pt}\bar{\mathbbmss{D}}^{\dot{a}}  \right\}.
\end{equation}
\end{definition}

Clearly $\D$ is spanned by four odd vector fields, thus it defines a distribution that is of corank $(4|0)$. That is the SUSY structure has four less even vectors in its local basis as compared to the tangent bundle.  Of course there are many other basis vectors that could be chosen to span the SUSY structure.  The SUSY structure is non-integrable in the sense of  Frobenius as

\begin{equation}\nonumber
[\mathbbmss{D}_{a}, \bar{\mathbbmss{D}}^{\dot{a}}] = -2i (\sigma^{\mu})_{a}^{\:\: \dot{a}}\frac{\partial}{\partial x^{\mu}} \not\subset \D.
\end{equation}

So the SUSY structure is formally very similar to a contact structure on a manifold, apart from the corank. The claim is that one can encode the SUSY structure in a \emph{higher} version of a contact structure.

\section{Supersymmetry and polycontact structures}\label{sec:polycontact}

The standard approach to constructing SUSY covariant derivatives is to employ the methods of homogeneous spaces applied to supermanifolds. One constructs the appropriate Maurer--Cartan form  and then extracts the covariant derivatives. For rigid superspace one can more or less guess the covariant derivatives from their required algebraic properties. The power of coset methods lies in the ability to deal with far more complicated theories, such as supergravity.  An accessible review to the construction of the Maurer--Cartan form associated with $\mathcal{N}=1$ supersymmetric field theory can be found in \cite{Bagger1996}.\\

In \cite{Bruce2011} we employed  the language of contact geometry to investigate  the $d=1$, $N=2$ super-Poincar\'e algebra. In particular it was show that the contact structure on $\mathbbmss{R}^{1|2}$ is contained in the appropriate Maurer--Cartan form  as the component belonging to the stability subgroup generated by temporal translations. Taking this observation as our \emph{cue},  let us quickly examine the relevant  Maurer-Cartan form for the  $d=4$, $\mathcal{N}=1$ super-Poincar\'e group.  \\

Let $G$ be the $d=4$, $\mathcal{N}=1$ super-Poincar\'e group  and let $H$ be the Lorentz group. The group $H$ is a stabiliser subgroup of $G$.  As the supergroup $G$ acts on $\mathbbmss{R}^{4|4}$ transitively we have:\\

\parbox[h][50pt][l]{420pt}{
\emph{The action of the supergroup $G$ can be realised by the left multiplication on the coset $G/H$ and the coordinates which parameterise the coset are given by the coordinates on $\mathbbmss{R}^{4|4}$}.}\\

Then

\begin{equation}
g = e^{i \left( x^{\mu} \frac{\partial}{\partial x^{\mu}} + \theta^{a} Q_{a} + \bar{\theta}_{\dot{a}}\bar{Q}^{\dot{a}}  \right)},
\end{equation}

is a natural parametrisation of the coset $G/H$.\\
\begin{definition}
The  (left invariant) \textbf{Maurer--Cartan form} is the  one form  given by
\begin{equation}\nonumber
i \Omega = g^{-1}\left(dg  \right).
\end{equation}
\end{definition}

We include an overall factor of ``$i$" for convenience.  It is natural to consider the Maurer--Cartan form as a tangent bundle valued one-form on  $\mathbbmss{R}^{4|4}$. To calculate this one appeals to \emph{the Hadamard lemma}, which of course is closely related to the \emph{Baker--Campbell--Hausdorff formula}:

\begin{eqnarray}
\nonumber i \Omega &=& e^{-i \left(  x^{\mu} \frac{\partial}{\partial x^{\mu}} + \theta^{a} Q_{a} + \bar{\theta}_{\dot{a}}\bar{Q}^{\dot{a}}   \right)}\left(  de^{i \left(  x^{\mu} \frac{\partial}{\partial x^{\mu}} + \theta^{a} Q_{a} + \bar{\theta}_{\dot{a}}\bar{Q}^{\dot{a}}   \right)} \right)\\
\nonumber &=& i \left( \left(dx^{\mu} + i \left( \theta^{a}(\sigma^{\mu})_{a}^{\:\: \dot{b}}d\bar{\theta}_{\dot{b}} + d \theta^{a}(\sigma^{\mu})_{a}^{\:\: \dot{b}} \bar{\theta}_{\dot{b}} \right) \right)\frac{\partial}{\partial x^{\mu}}  + d \theta^{a} Q_{a} + d \bar{\theta}_{\dot{a}}\bar{Q}^{\dot{a}}\right).
\end{eqnarray}

Then we ``extract" the part associated with the Lorentz group:

\begin{definition}
The \textbf{SUSY polycontact structure} on $\mathbbmss{R}^{4|4}$ is the vector-valued Grassmann odd one-form
\begin{equation}
\alpha = \left( dx^{\mu} + i \left( \theta^{a}(\sigma^{\mu})_{a}^{\:\: \dot{b}}d\bar{\theta}_{\dot{b}} + d \theta^{a}(\sigma^{\mu})_{a}^{\:\: \dot{b}} \bar{\theta}_{\dot{b}} \right) \right)\frac{\partial}{\partial x^{\mu}}.
\end{equation}
\end{definition}

Note that the SUSY polytcontact structure can be written in the form
\begin{equation}\nonumber
\alpha = d - d\theta^{a}\mathbbmss{D}_{a} - d \bar{\theta}_{\dot{a}}\bar{\mathbbmss{D}}^{\dot{a}},
\end{equation}
where $d$ is the exterior derivative on $\mathbbmss{R}^{4|4}$.  Also notice that setting $\theta = \bar{\theta} = 0$ reduces the SUSY polycontact structure to the exterior derivative on $\mathbbmss{R}^{4}$.\\

The claim is that the SUSY polycontact structure  shares some basic properties with  contact structures. In particular the SUSY polycontact structure   is nowhere-vanishing in the sense that $\alpha|_{\theta = 0, \bar{\theta}=0} \neq 0$. The SUSY polycontact structure can be thought of as being  ``four contact structures" labeled  by a space-time index. Each ``contact form" defines a corank $(1|0)$ hyperplane distribution via its kernel and thus we have in total  a corank $(4|0)$ distribution.

\begin{remark}
The vector-valued two form $d \alpha$ can be considered as a \emph{polysymplectic from} (we briefly discuss such structures in Section(\ref{sec:polysymplectic})) on the space $\mathbbmss{R}^{0|4} \subset \mathbbmss{R}^{4|4}$. That is the exterior derivative of the polycontact form is closed and (as we shall prove) non-degenerate on  the pure odd subspace $\mathbbmss{R}^{0|4}$. This should be compared to the standard contact structure on $\mathbbmss{R}^{3}$ given by $dz + x dy$ in natural coordinates $(x,y,z)$. Note $d (dz + x dy) = dxdy$ is the canonical symplectic structure on $\mathbbmss{R}^{2} \:\: (\subset \mathbbmss{R}^{3})$. This justifies our nomenclature \emph{polycontact}.
\end{remark}

\begin{theorem}\label{theorem1}
\begin{enumerate}
\item The kernel of the SUSY polycontact structure is precisely the SUSY structure.
\begin{equation}\nonumber
\D := \ker(\alpha) = \Span\left\{ \mathbbmss{D}_{a},  \bar{\mathbbmss{D}}^{\dot{a}} \right\} \subset T(\mathbbmss{R}^{4|4}).
\end{equation}\nonumber
\item The exterior derivative of the SUSY polycontact structure is non-degenerate on the SUSY structure in the sense that
\begin{equation}\nonumber
i_{X}\left(d\alpha\right) = 0 \Leftrightarrow X=0,
\end{equation}
if $X \in \D$.
\end{enumerate}
\end{theorem}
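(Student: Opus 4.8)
The plan is to exploit that $\alpha$ is valued in the tangent bundle along the flat frame $\frac{\partial}{\partial x^{\mu}}$, so that both the contraction $i_{X}\alpha$ and the exterior derivative $d\alpha$ act componentwise. Writing $\alpha = \alpha^{\mu}\,\frac{\partial}{\partial x^{\mu}}$ with
\[
\alpha^{\mu} = dx^{\mu} + i\left( \theta^{a}(\sigma^{\mu})_{a}^{\:\: \dot{b}}d\bar{\theta}_{\dot{b}} + d\theta^{a}(\sigma^{\mu})_{a}^{\:\: \dot{b}}\bar{\theta}_{\dot{b}} \right),
\]
the equation $i_{X}\alpha = 0$ is equivalent to $i_{X}\alpha^{\mu}=0$ for each $\mu$, and likewise for $d\alpha$. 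All of the work is then carried out on four ordinary scalar one-forms.

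For part (1) I would contract $\alpha^{\mu}$ with an arbitrary vector field $X = X^{\nu}\frac{\partial}{\partial x^{\nu}} + X^{b}\frac{\partial}{\partial\theta^{b}} + X_{\dot{b}}\frac{\partial}{\partial\bar{\theta}_{\dot{b}}}$. Using $i_{X}dx^{\mu}=X^{\mu}$, $i_{X}d\theta^{a}=X^{a}$ and $i_{X}d\bar{\theta}_{\dot{b}}=X_{\dot{b}}$ (with the Grassmann signs fixed once and for all from the appendix), the kernel condition $i_{X}\alpha^{\mu}=0$ becomes
\[
X^{\mu} = -\,i\left( \theta^{a}(\sigma^{\mu})_{a}^{\:\: \dot{b}}X_{\dot{b}} + X^{a}(\sigma^{\mu})_{a}^{\:\: \dot{b}}\bar{\theta}_{\dot{b}} \right).
\]
This shows the even components $X^{\mu}$ are completely determined by the free odd components $X^{a}, X_{\dot{b}}$; substituting back and regrouping reproduces exactly $X = X^{a}\mathbbmss{D}_{a} + X_{\dot{b}}\bar{\mathbbmss{D}}^{\dot{b}}$, precisely because the $\frac{\partial}{\partial x^{\mu}}$-tails of $\mathbbmss{D}_{a}$ and $\bar{\mathbbmss{D}}^{\dot{a}}$ are what is needed to cancel the $\theta,\bar{\theta}$ terms in $\alpha^{\mu}$. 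Hence $\ker(\alpha)=\Span\{\mathbbmss{D}_{a},\bar{\mathbbmss{D}}^{\dot{a}}\}=\D$, with equality rather than mere containment guaranteed because the parametrisation above is exhaustive.

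For part (2) I would first compute $d\alpha^{\mu}$. Since $d(dx^{\mu})=0$, and using the graded Leibniz rule together with the paper's convention that $d\theta,d\bar{\theta}$ are Grassmann even, both summands of $\alpha^{\mu}$ differentiate to the same expression, giving
\[
d\alpha^{\mu} = 2i\,(\sigma^{\mu})_{a}^{\:\: \dot{b}}\,d\theta^{a}\,d\bar{\theta}_{\dot{b}}.
\]
For $X\in\D$ I write $X = X^{a}\mathbbmss{D}_{a} + X_{\dot{b}}\bar{\mathbbmss{D}}^{\dot{b}}$ and contract; only the $\frac{\partial}{\partial\theta}$ and $\frac{\partial}{\partial\bar{\theta}}$ parts pair with $d\theta,d\bar{\theta}$, so
\[
i_{X}(d\alpha^{\mu}) = 2i\,(\sigma^{\mu})_{a}^{\:\: \dot{b}}\left( X^{a}\,d\bar{\theta}_{\dot{b}} \pm d\theta^{a}\,X_{\dot{b}} \right).
\]
As $d\theta^{a}$ and $d\bar{\theta}_{\dot{b}}$ are linearly independent, requiring $i_{X}(d\alpha^{\mu})=0$ for every $\mu$ forces $(\sigma^{\mu})_{a}^{\:\: \dot{b}}X^{a}=0$ and $(\sigma^{\mu})_{a}^{\:\: \dot{b}}X_{\dot{b}}=0$ for all $\mu$.

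The crux — and the step I expect to be the genuine mathematical obstacle rather than bookkeeping — is the linear-algebra input that the four matrices $\sigma^{\mu}=(\mathbbmss{1},\sigma^{i})$ form a basis of the space of $2\times 2$ matrices; in particular $\sigma^{0}=\mathbbmss{1}$ alone makes the soldering map $X^{a}\mapsto (\sigma^{\mu})_{a}^{\:\: \dot{b}}X^{a}$ injective. Consequently the only spinors annihilated by all of the $(\sigma^{\mu})_{a}^{\:\: \dot{b}}$ are the zero spinors, so $X^{a}=0$ and $X_{\dot{b}}=0$, i.e. $X=0$; the converse $X=0\Rightarrow i_{X}(d\alpha)=0$ is immediate, which establishes the claimed non-degeneracy. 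The one delicate point running through the whole argument is the consistent treatment of Grassmann signs in the interior products and in the Leibniz rule for the form-odd but Grassmann-even objects $d\theta^{a},d\bar{\theta}_{\dot{b}}$; I would pin these down using the appendix conventions, but the geometric content rests entirely on the invertibility of the Pauli soldering form.
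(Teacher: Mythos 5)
Your proposal is correct and follows essentially the same route as the paper's own proof: writing $\alpha = \alpha^{\mu}\,\frac{\partial}{\partial x^{\mu}}$ and computing $i_{X}\alpha$, $d\alpha = 2i\left(d\theta^{a}(\sigma^{\mu})_{a}^{\:\: \dot{b}}d\bar{\theta}_{\dot{b}}\right)\frac{\partial}{\partial x^{\mu}}$ and $i_{X}(d\alpha)$ directly in local coordinates using the conventions of the appendix. You are in fact slightly more explicit at the two points where the paper is terse: you solve the kernel condition for a \emph{general} vector field, which yields the equality $\ker(\alpha)=\Span\{\mathbbmss{D}_{a},\bar{\mathbbmss{D}}^{\dot{a}}\}$ rather than only the containment the paper's basis-vector check establishes, and you spell out the final linear-algebra step (invertibility of $\sigma^{0}=\mathbbmss{1}$ forcing $X^{a}=X_{\dot{b}}=0$) that the paper compresses into the phrase ``which implies the non-degeneracy condition.''
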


\begin{proof}
Consult the Appendix for details about vector-valued differential forms on supermanifolds.
\begin{enumerate}
\item Explicitly
\begin{equation}\nonumber
i_{\mathbbmss{D}_{a}} = - \frac{\partial}{\partial d\theta^{a}} + i (\sigma^{\mu})_{a}^{\:\: \dot{b}}\bar{\theta}_{\dot{b}} \frac{\partial }{\partial dx^{\mu}}.
\end{equation}
Then
\begin{equation}\nonumber
i_{\mathbbmss{D}_{a}}\alpha = \left(i (\sigma^{\mu})_{a}^{\:\: \dot{b}}\bar{\theta}_{\dot{b}} -  i (\sigma^{\mu})_{a}^{\:\: \dot{b}}\bar{\theta}_{\dot{b}}\right) \frac{\partial}{\partial x^{\mu}} = 0.
\end{equation}
An almost identical calculation establishes that $\alpha$ is annihilated by $i_{\bar{\mathbbmss{D}}^{\dot{a}}}$. As it is clear that $\mathbbmss{D}_{a}$ and $\bar{\mathbbmss{D}}^{\dot{a}}$ are linearly independent we establish that they form  a basis for the SUSY structure.
\item It is straightforward to  show that
\begin{equation}\nonumber
d\alpha = 2 i \left(d \theta^{a} (\sigma^{\mu})_{a}^{\:\: \dot{b}}d\bar{\theta}_{\dot{b}} \right)\frac{\partial}{\partial x^{\mu}},
\end{equation}
and thus
\begin{equation}\nonumber
i_{\mathbbmss{D}_{a}}(d\alpha) = -2 i \left( (\sigma^{\mu})_{a}^{\:\: \dot{b}}d\bar{\theta}_{\dot{b}} \right)\frac{\partial}{\partial x^{\mu}}, \hspace{15pt}\textnormal{and}\hspace{15pt} i_{\bar{\mathbbmss{D}}^{\dot{a}}}(d\alpha) = - 2 i \left(d \theta^{b} (\sigma^{\mu})_{b}^{\:\: \dot{a}} \right)\frac{\partial}{\partial x^{\mu}},
\end{equation}
which implies the non-degeneracy condition.
\end{enumerate}
\end{proof}

\begin{proposition}\label{prop1}
The SUSY polycontact on $\mathbbmss{R}^{4|4}$ is invariant under
\begin{enumerate}
\item SUSY transformations:\\
\begin{tabular}{l}
$x'^{\mu} = x^{\mu}+ i \left(\epsilon^{a}(\sigma^{\mu})_{a}^{\:\: \dot{b}} \bar{\theta}_{\dot{b}} - \theta^{a}(\sigma^{\mu})_{a}^{\:\: \dot{b}} \bar{\epsilon}_{\dot{b}}  \right)$ \\
$ \theta'^{a} = \theta^{a}+ \epsilon^{a}$  and $\bar{\theta}'_{\dot{a}} = \bar{\theta}_{\dot{a}}+ \bar{\epsilon}_{\dot{a}}$.
\end{tabular}
\item Poincar$\acute{\textnormal{e}}$ transformations:\\
 \begin{tabular}{l}
 $x'^{\mu} = x^{\nu}\Lambda_{\nu}^{\:\: \mu} + a^{\mu}$.
 \end{tabular}
\item R-transformations: \\ $\theta'^{a} = e^{i \beta}\theta^{a}$, and $\bar{\theta}'_{\dot{a}} = e^{-i \beta} \bar{\theta}_{\dot{a}}$, where $\beta \in \mathbbmss{R}$.
\end{enumerate}
\end{proposition}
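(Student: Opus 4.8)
The plan is to verify $\phi^{*}\alpha = \alpha$ directly in coordinates for each of the three families of transformations, treating each case by substituting the transformed coordinates, their differentials, and the transformed frame $\partial/\partial x^{\mu}$ into the defining formula for $\alpha$. Conceptually, since $\alpha$ was extracted from the left-invariant Maurer--Cartan form $g^{-1}dg$, invariance under the supertranslation and translation part of $G$ is essentially built in; nevertheless the paper's explicit-coordinate style makes a direct verification the natural route, and the Lorentz and R cases still warrant their own checks.

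For the SUSY transformations I would feed in the induced laws already computed in Section~\ref{sec:N1}, namely $dx'^{\mu} = dx^{\mu} - i(d\theta^{a}(\sigma^{\mu})_{a}^{\:\: \dot{b}}\bar{\epsilon}_{\dot{b}} + \epsilon^{a}(\sigma^{\mu})_{a}^{\:\: \dot{b}}d\bar{\theta}_{\dot{b}})$, together with $d\theta'^{a} = d\theta^{a}$, $d\bar{\theta}'_{\dot{a}} = d\bar{\theta}_{\dot{a}}$ and $\partial/\partial x'^{\mu} = \partial/\partial x^{\mu}$, and then substitute $\theta' = \theta + \epsilon$ and $\bar{\theta}' = \bar{\theta} + \bar{\epsilon}$ into the bilinear terms. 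Upon expanding, the $\epsilon$-- and $\bar{\epsilon}$--dependent pieces produced by $dx'^{\mu}$ cancel exactly against those produced by the shifts in $\theta'$ and $\bar{\theta}'$, returning the unprimed $\alpha$. This is a short, purely algebraic cancellation.

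For the Poincar\'e transformations I would split off translations, which are immediate since $dx^{\mu}$, $\partial/\partial x^{\mu}$ and the spinors are all unchanged. For the Lorentz part I would use that $dx^{\mu}\,\partial/\partial x^{\mu}$ is invariant under a linear change $x'^{\mu} = x^{\nu}\Lambda_{\nu}^{\:\: \mu}$ (the differentials transform by $\Lambda$ and the frame $\partial/\partial x$ by $\Lambda^{-1}$), combined with the accompanying spinor action and the intertwining relation between the vector and spinor representations, schematically $\Lambda_{\nu}^{\:\: \mu}(\sigma^{\nu}) = M^{\dagger}\sigma^{\mu}M$ with $M \in SL(2,\mathbbmss{C})$ covering $\Lambda$; this carries the bilinears $\theta\sigma d\bar{\theta}$ and $d\theta\sigma\bar{\theta}$ into themselves. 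For the R--transformations invariance is immediate: $\alpha$ contains the spinors only through the products $\theta^{a}\,d\bar{\theta}_{\dot{b}}$ and $d\theta^{a}\,\bar{\theta}_{\dot{b}}$, so the opposite phases $e^{i\beta}$ on $(\theta, d\theta)$ and $e^{-i\beta}$ on $(\bar{\theta}, d\bar{\theta})$ cancel within each term, while $dx^{\mu}\,\partial/\partial x^{\mu}$ is untouched.

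The main obstacle is the Lorentz case, and only because the statement records the transformation of $x^{\mu}$ alone and suppresses the action on $\theta$ and $\bar{\theta}$. The real content there is that the $(\sigma^{\mu})_{a}^{\:\: \dot{b}}$ are precisely the intertwiners between the vector and the spinor--conjugate-spinor representations, so that the simultaneous transformation of $\theta$, $\bar{\theta}$ and of the index $\mu$ carried by $\partial/\partial x^{\mu}$ conspires to leave $\alpha$ fixed; making this rigorous requires writing down the spinor action and the $\sigma$--intertwining identity explicitly, the one ingredient not already supplied in the excerpt. The SUSY and R cases, by contrast, are essentially bookkeeping.
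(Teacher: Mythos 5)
Your proposal is correct and takes essentially the same approach as the paper: the paper's entire proof reads ``The above proposition can be proved via direct computation, details are left for the reader,'' and your coordinate-by-coordinate verification (cancellation of the $\epsilon$, $\bar{\epsilon}$ terms for SUSY, phase cancellation for R, and the $\Lambda$ versus $\Lambda^{-1}$ cancellation for translations and Lorentz) supplies exactly those details. You also correctly flag the one genuine subtlety the paper's terse proof glosses over, namely that the Lorentz case only works because the suppressed spinor action and the intertwining property of the $(\sigma^{\mu})_{a}^{\:\:\dot{b}}$ compensate the transformation of $\partial/\partial x^{\mu}$.
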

\begin{proof}
The above proposition can be proved via direct computation, details are left for the reader.
\end{proof}

We then interpret the SUSY transformations, the  Poincar$\acute{\textnormal{e}}$ transformations and the R-transformations as ``strict polycontactomorphisms", that is they preserve the SUSY polycontact structure, and thus the SUSY structure.

\begin{corollary}
 The vector fields $Q_{a}$ and  $\bar{Q}^{\dot{a}}$ are strict contact vector fields of the the SUSY ploycontact structure, i.e.
 \begin{equation}\nonumber
 L_{Q_{a}}\alpha = 0,   \hspace{15pt}\textnormal{and} \hspace{15pt} L_{\bar{Q}^{\dot{a}}}\alpha=0.
 \end{equation}
\end{corollary}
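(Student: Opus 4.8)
The plan is to recognise $L_{Q_{a}}\alpha = 0$ and $L_{\bar{Q}^{\dot{a}}}\alpha=0$ as the infinitesimal counterpart of Proposition~\ref{prop1}(1). The vector fields $Q_{a},\ \bar{Q}^{\dot{a}}$ are exactly the generators of the finite SUSY transformations, so on conceptual grounds differentiating the invariance $\phi^{*}\alpha=\alpha$ with respect to the odd parameters $\epsilon^{a},\ \bar{\epsilon}_{\dot{a}}$ at the origin should produce the two vanishing Lie derivatives. Because the flow parameters are Grassmann odd, however, I would rather present a manifestly self-contained computation, and the cleanest vehicle is the decomposition already recorded above,
\begin{equation}\nonumber
\alpha = d - d\theta^{a}\mathbbmss{D}_{a} - d\bar{\theta}_{\dot{a}}\bar{\mathbbmss{D}}^{\dot{a}},
\end{equation}
together with the graded Leibniz rule for $L_{Q_{a}}$ acting on a tangent-bundle-valued one-form.

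First I would dispatch the tautological term: the tangent-bundle-valued one-form $d$ is the identity endomorphism of $T(\mathbbmss{R}^{4|4})$ written in coordinates, and as such it is annihilated by every Lie derivative, $L_{Q_{a}}d=0$ (one checks by Cartan's identity that the coordinate and frame contributions cancel). Next, the coordinate differentials are inert: since $i_{Q_{a}}d\theta^{b}=\delta^{b}_{a}$ is constant and $i_{Q_{a}}d\bar{\theta}_{\dot{b}}=0$, Cartan's identity gives $L_{Q_{a}}d\theta^{b}=0$ and $L_{Q_{a}}d\bar{\theta}_{\dot{b}}=0$. By the graded Leibniz rule the whole computation therefore collapses onto the frame vectors, leaving
\begin{equation}\nonumber
L_{Q_{a}}\alpha = -\,d\theta^{b}\,\{Q_{a},\mathbbmss{D}_{b}\} - d\bar{\theta}_{\dot{b}}\,\{Q_{a},\bar{\mathbbmss{D}}^{\dot{b}}\},
\end{equation}
since $L_{Q_{a}}\mathbbmss{D}_{b}=[Q_{a},\mathbbmss{D}_{b}]$ is the graded (anti)commutator of two odd fields. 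The proof then closes by invoking the standard supersymmetry algebra $\{Q_{a},\mathbbmss{D}_{b}\}=0$ and $\{Q_{a},\bar{\mathbbmss{D}}^{\dot{b}}\}=0$ --- the defining property that makes $\mathbbmss{D}$ a \emph{covariant} derivative --- which I would verify in one line from the explicit expressions for $Q$ and $\mathbbmss{D}$. The argument for $\bar{Q}^{\dot{a}}$ is word-for-word the same.

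As an independent check I would also run Cartan's identity directly, $L_{Q_{a}}\alpha = i_{Q_{a}}(d\alpha) - d(i_{Q_{a}}\alpha)$ (with the sign appropriate to an odd field), reusing $d\alpha = 2i\,(d\theta^{b}(\sigma^{\mu})_{b}^{\:\: \dot{c}}d\bar{\theta}_{\dot{c}})\frac{\partial}{\partial x^{\mu}}$ from Theorem~\ref{theorem1}; a short contraction gives $i_{Q_{a}}\alpha = -2i\,(\sigma^{\mu})_{a}^{\:\: \dot{b}}\bar{\theta}_{\dot{b}}\frac{\partial}{\partial x^{\mu}}$, whose exterior derivative coincides with $i_{Q_{a}}(d\alpha)$, so the two terms cancel. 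The main obstacle throughout is not the algebra but the bookkeeping of graded signs: I must use the correct sign in the graded Cartan identity for the \emph{odd} field $Q_{a}$, the correct graded Leibniz rule for Lie derivatives of tangent-bundle-valued forms, and the convention for $i_{Q_{a}}$ on $\Pi T(\mathbbmss{R}^{4|4})$, all fixed consistently with the Appendix; a single misplaced sign would spuriously leave a factor of $2$ rather than a cancellation.
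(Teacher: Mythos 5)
Your proof is correct, but it takes a genuinely different route from the paper's. The paper proves the statement exactly the way your opening paragraph describes and then sets aside: it is a corollary of Proposition(3.\ref{prop1}), because $Q_{a}$ and $\bar{Q}^{\dot{a}}$ generate the finite SUSY transformations and, the parameters $\epsilon^{a},\bar{\epsilon}_{\dot{a}}$ being odd and nilpotent, invariance under the finite transformations is literally the vanishing of the terms linear in $\epsilon,\bar{\epsilon}$, i.e. the two Lie derivatives; the paper then merely notes that the corollary ``can also be verified directly''. You carry out that direct verification instead, and your computations check out against the Appendix conventions: $L_{X}d=0$ for every $X$ (the coordinate and frame contributions cancel identically in the Appendix's Lie-derivative formula), $L_{Q_{a}}d\theta^{b}=L_{Q_{a}}d\bar{\theta}_{\dot{b}}=0$ (the contractions are constants, up to the $(-1)^{\widetilde{X}}$ in the Appendix's convention for $i_{X}$, which is harmless since only constancy matters), and the graded Leibniz rule then reduces everything to $\{Q_{a},\mathbbmss{D}_{b}\}=\{Q_{a},\bar{\mathbbmss{D}}^{\dot{b}}\}=0$. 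One caution on logical order: the paper states the relations $[Q_{a},\mathbbmss{D}_{b}]=0$, etc., \emph{after} this corollary, presenting them as consequences of it; your proof runs the implication the other way, and remains non-circular only because, as you explicitly say, the anticommutators are verified in one line directly from the coordinate expressions for $Q$ and $\mathbbmss{D}$, independently of the corollary. What each route buys: the paper's is essentially free, exhibiting the corollary as the infinitesimal shadow of a group invariance, though it rests on Proposition(3.\ref{prop1}) whose own proof is ``left for the reader''; yours is self-contained, makes the graded signs explicit, and sidesteps any subtlety about flows with odd parameters. Your Cartan cross-check is also consistent with the Appendix: the frame-correction term $\Omega^{B}\,\partial X^{A}/\partial x^{B}$ vanishes here because the components of $Q_{a}$ are $x$-independent, and both remaining terms equal $-2i(\sigma^{\mu})_{a}^{\:\:\dot{b}}d\bar{\theta}_{\dot{b}}\,\partial/\partial x^{\mu}$, so they cancel whichever sign convention one adopts for the odd Cartan identity (note that the Appendix's formula actually gives $L_{Q_{a}}\alpha = d(i_{Q_{a}}\alpha)-i_{Q_{a}}(d\alpha)$, the negative of the combination you wrote --- immaterial, since the difference vanishes either way).
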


The above corollary can  also be verified directly. \\

Of course the vector fields $Q$ and $\bar{Q}$  represent ``infinitesimal strict polycontactomorphisms". We think of the SUSY polycontact form as being constant in the ``directions" defined by $Q$ and $\bar{Q}$. This in turn implies that
\begin{center}
\begin{tabular}{ll}
$[Q_{a},\mathbbmss{D}_{b}] = 0$, &  $[\bar{Q}^{\dot{a}}, \bar{\mathbbmss{D}}^{\dot{b}}] =0$,\\
$[Q_{a}, \bar{\mathbbmss{D}}^{\dot{b}}] =0$ & $[\bar{Q}^{\dot{a}}, \mathbbmss{D}_{b}] =0$,
\end{tabular}
\end{center}

as required by covariant derivatives.  This can of course be verified directly using local coordinates if desired.  \\

Recall that the exterior derivative of the  SUSY polycontact structure is non-degenerate on the SUSY structure. This is the analogue of a hyperplane distribution being ``maximally non-integrable" and thus being described in terms of a classical contact structure.\\

From the classical theory of contact structures, we know that there is a privileged strict contact  vector field known as the \textbf{Reeb vector field}. The generalisation to the SUSY polycontact structure is as follows. Instead of a single Reeb vector we have four Reeb vectors indexed by space-time. Let us denote these vector as $P_{\mu}$, the reason for this will become clear. These vector fields are defined by the condition that

\begin{equation}\nonumber
i_{P_{\mu}}\alpha = \frac{\partial}{\partial x^\mu} \hspace{15pt}\textnormal{and} \hspace{15pt} i_{P_{\mu}}(d\alpha) =0.
\end{equation}

\begin{proposition}\label{prop2}
On $\mathbbmss{R}^{4|4}$ equipped with the SUSY polycontact structure  the Reeb vector fields are given by $P_{\mu} = \frac{\partial }{\partial x^{\mu}}$ in natural coordinates.
\end{proposition}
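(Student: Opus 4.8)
The plan is to verify the two defining conditions for the Reeb fields directly on the candidate $P_\mu = \frac{\partial}{\partial x^\mu}$, and then to argue that those conditions determine the $P_\mu$ uniquely. Using the contraction conventions recorded in the Appendix (the same ones invoked in the proof of Theorem~\ref{theorem1}), the interior product with the even field $\frac{\partial}{\partial x^\nu}$ acts on the fibre coordinates of $\Pi T(\mathbbmss{R}^{4|4})$ as $\frac{\partial}{\partial(dx^\nu)}$, leaving the tangent-vector part of the vector-valued form untouched. Applying this to $\alpha$ itself, the only $dx$-dependence sits in the term $dx^{\mu}\frac{\partial}{\partial x^{\mu}}$, so contraction returns $\delta^{\mu}_{\nu}\frac{\partial}{\partial x^{\mu}} = \frac{\partial}{\partial x^{\nu}}$. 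Hence the first Reeb condition $i_{P_\mu}\alpha = \frac{\partial}{\partial x^\mu}$ holds exactly.

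Next I would check the second condition using the expression $d\alpha = 2i\left(d\theta^{a}(\sigma^{\mu})_{a}^{\:\: \dot{b}}d\bar{\theta}_{\dot{b}}\right)\frac{\partial}{\partial x^{\mu}}$ already obtained in the proof of Theorem~\ref{theorem1}. Since $d\alpha$ is built entirely from $d\theta$ and $d\bar{\theta}$ and carries no $dx$ factor, the contraction $i_{\frac{\partial}{\partial x^\nu}}(d\alpha) = \frac{\partial}{\partial(dx^\nu)}(d\alpha)$ vanishes identically. A parity check confirms consistency: $\alpha$ is Grassmann odd and $dx$ is odd, so contracting with the even field $\frac{\partial}{\partial x^\mu}$ produces the even vector field $\frac{\partial}{\partial x^\mu}$ on the right, exactly as required. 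This establishes that $P_\mu = \frac{\partial}{\partial x^\mu}$ satisfies both Reeb conditions.

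Finally I would address uniqueness, which is what makes these fields genuinely \emph{the} Reeb fields rather than merely one solution. If $P'_\mu$ also satisfies $i_{P'_\mu}\alpha = \frac{\partial}{\partial x^\mu}$ and $i_{P'_\mu}(d\alpha)=0$, then $Y := P_\mu - P'_\mu$ obeys $i_{Y}\alpha = 0$, so $Y \in \ker(\alpha) = \D$ by Theorem~\ref{theorem1}(1), while simultaneously $i_{Y}(d\alpha)=0$. But Theorem~\ref{theorem1}(2) asserts precisely that $d\alpha$ is non-degenerate on $\D$, forcing $Y = 0$. The computations themselves are routine; the only delicate points are the bookkeeping of interior-product signs on even versus odd fibre coordinates, and the observation that the interior product of a tangent-bundle-valued form must contract only the form part while retaining the vector part, so that $i_{P_\mu}\alpha$ is again a genuine vector field comparable with $\frac{\partial}{\partial x^\mu}$. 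The real content lies not in the verification but in the appeal to the non-degeneracy of Theorem~\ref{theorem1}(2), which is what guarantees that the Reeb data singles out a unique quadruple.
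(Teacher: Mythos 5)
Your proof is correct, and it follows the route the paper intends: the paper's own proof of this proposition is literally the single phrase ``via direct computation,'' and your two contractions (using the Appendix convention $i_{\partial/\partial x^\nu} = \partial/\partial(dx^\nu)$ on an even field, applied to $\alpha$ and to $d\alpha = 2i\,d\theta^{a}(\sigma^{\mu})_{a}^{\:\:\dot{b}}d\bar{\theta}_{\dot{b}}\,\frac{\partial}{\partial x^{\mu}}$) are exactly that computation carried out. Where you go beyond the paper is the uniqueness argument: setting $Y = P_\mu - P'_\mu$, concluding $Y \in \ker(\alpha) = \D$ from Theorem~\ref{theorem1}(1), and then $Y=0$ from the non-degeneracy in Theorem~\ref{theorem1}(2). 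This is sound and is a genuine improvement, since it is what justifies the definite article in ``the Reeb vector fields''; the paper never addresses it. The only point worth making explicit is that the interior product carries the sign $(-1)^{\widetilde{X}}$, so when subtracting two solutions you should either note that any solution is forced to be even by parity counting (the components of $i_X\alpha$ have parity $\widetilde{X}$, and they must equal the even components of $\frac{\partial}{\partial x^\mu}$), or split $Y$ into homogeneous parts and apply the kernel/non-degeneracy argument to each part separately; either remark closes the small gap and the rest of your argument stands as written.
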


\begin{proof}
Via direct computation.
\end{proof}

Thus we see that the  Reeb vector fields corresponds to space-time translations. We are then led to an interesting interpretation of the  so-called \emph{right super-translation and space-time-translation algebra}:

\begin{eqnarray}
[Q_{a},\bar{Q}^{\dot{b}}] &=& 2 i(\sigma^{\mu})_{a}^{\:\: \dot{b}} P_{\mu},\\
\nonumber [Q_{a}, P_{\mu}] &=& [\bar{Q}^{\dot{a}}, P_{\mu}] = 0,
\end{eqnarray}

as a  Lie subalgebra of the Lie algebra of   polycontact vector fields of the SUSY polycontact structure.\\

\begin{remark}
One can also discuss  R-symmetry infinitesimally as being implemented by the vector field\newline  $R := i \left( \bar{\theta}_{\dot{a}}\frac{\partial}{\partial \bar{\theta}_{\bar{a}}}   -  \theta^{a} \frac{\partial}{\partial \theta^{a}}\right) \in \Vect(\mathbbmss{R}^{4|4})$. Then:

\begin{center}
\begin{tabular}{ll}
$[R,Q_{a}] = i Q_{a}$,   &   $[R,\bar{Q}^{\dot{a}}] = -i \bar{Q}^{\dot{a}}$,\\
$[R,P_{\mu}] =0$ & $[R,R]=0$.
\end{tabular}
\end{center}
In short, $R$-symmetry can also be understood in terms of the Lie algebra of strict polycontact vector fields.\\
\end{remark}

Any vector field  $X \in \Vect(\mathbbmss{R}^{4|4})$ can be written in natural coordinates as

\begin{equation}\nonumber
X = X^{\mu}\frac{\partial}{\partial x^{\mu}} + X^{a}\frac{\partial}{\partial \theta^{a}} + \bar{X}_{\dot{a}} \frac{\partial}{\partial \bar{\theta}_{\dot{a}}}.
\end{equation}

Then using the local expressions for the SUSY covariant derivatives any vector field can be cast in the form
\begin{equation}
X =   \left( X^{\mu} + i \left( X^{a}(\sigma^{\mu})_{a}^{\:\: \dot{b}}\bar{\theta}_{\dot{b}}  +  \bar{X}_{\dot{a}} \theta^{b}(\sigma^{\mu})_{b}^{\:\: \dot{a}} \right) \right)\frac{\partial}{\partial x^{\mu}}  +  X^{a} \mathbbmss{D}_{a}+ \bar{X}_{\dot{a}} \bar{\mathbbmss{D}}^{\dot{a}},
\end{equation}

quite independently of the SUSY polycontact structure. The remarkable point is that this decomposition can be understood in an analogous way to the decomposition of vector fields on a classical contact manifold.   \\

\begin{proposition}\label{prop3}
$\Vect(\mathbbmss{R}^{4|4})$  (i.e. sections of the tangent bundle) decomposes as
\begin{equation}\nonumber
 \Vect(\mathbbmss{R}^{4|4}) = \ker(\alpha)\oplus \ker(d \alpha).
\end{equation}
\end{proposition}

\begin{proof}
Follows from Theorem(3.\ref{theorem1}) and Proposition(3.\ref{prop2}).
\end{proof}

As for examples, it is easy to verify the decompositions

\begin{eqnarray}
\nonumber Q_{a} &=& \mathbbmss{D}_{a} + 2 i \:(\sigma^{\mu})_{a}^{\:\: \dot{b}} \bar{\theta}_{\dot{b}} P_{\mu},\\
\nonumber \bar{Q}^{\dot{a}} &=& \bar{\mathbbmss{D}}^{\dot{a}} + 2 i\: \theta^{b}(\sigma^{\mu})_{b}^{\:\: \dot{a}}P_{\mu},
\end{eqnarray}
and
\begin{equation}\nonumber
R = i \left( \bar{\theta}_{\dot{a}} \bar{\mathbbmss{D}}^{\dot{a}} - \theta^{a}\mathbbmss{D}_{a} \right) + i \left(-2i \: \theta^{a}(\sigma^{\mu})_{a}^{\:\: \dot{b}} \bar{\theta}_{\dot{b}} \right)P_{\mu}.
\end{equation}

\section{Polysymplectization}\label{sec:polysymplectic}

It is well known that extending the dimensions of a contact manifold by $\mathbbmss{R}$ one can canonically construct a symplectic manifold. This procedure is known as \emph{symplectization}.  By minor modification of the standard procedure a polysymplectic structure can be constructed on $\mathbbmss{R}^{4|4}\otimes \mathbbmss{R}$.\\

The definition of a polysymplectic structure that we will employ here is rather na\"{\i}ve.

\begin{definition}
A  vector-valued Grassmann even two-form  $\omega$ on a supermanifold $M$, is said to be a \textbf{polysymplectic form} if and only if
\begin{enumerate}
\item it is closed: $d\omega =0$.
\item it is non-degenerate: $i_{X}\omega = 0 \Longleftrightarrow X =0$ for all $X \in \Vect(M)$.
\end{enumerate}
\end{definition}

\begin{theorem}
Let $M = \mathbbmss{R}^{4|4} \otimes \mathbbmss{R}$ and let $\pi : M \rightarrow \mathbbmss{R}^{4|4}$ be the natural projection. Then the vector-valued even two-form $\omega = d \left( e^{\lambda} \pi^{*}\alpha \right)$ is a polysymplectic form on $M$. Here $\lambda$ is the coordinate on $\mathbbmss{R}$.
\end{theorem}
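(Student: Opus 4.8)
The plan is to verify the two defining properties of a polysymplectic form directly, exploiting the symplectization construction and Theorem~3.\ref{theorem1}. First I would compute $\omega = d(e^{\lambda}\pi^{*}\alpha)$ explicitly. Writing $\beta := \pi^{*}\alpha$ for the pullback of the SUSY polycontact form to $M = \mathbbmss{R}^{4|4}\otimes\mathbbmss{R}$, the Leibniz rule gives
\begin{equation}\nonumber
\omega = d(e^{\lambda}\beta) = e^{\lambda}\,d\lambda\wedge\beta + e^{\lambda}\,d\beta,
\end{equation}
where $d\beta = \pi^{*}(d\alpha)$ by naturality of the exterior derivative. The factor $e^{\lambda}$ is nowhere-vanishing, which will be the source of non-degeneracy in the new $\lambda$-direction.

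Closedness is the easy step: since $\omega$ is itself an exterior derivative, $d\omega = d\,d(e^{\lambda}\beta) = 0$ follows immediately from $d^{2}=0$, noting that $d$ commutes with the tangent-bundle-valued coefficient $\partial/\partial x^{\mu}$ (which is pulled back as a constant frame, as detailed in the Appendix). So the real content is non-degeneracy over all of $\Vect(M)$. The strategy here is to decompose an arbitrary $X\in\Vect(M)$ along the new coordinate: write $X = f\,\partial/\partial\lambda + \tilde{X}$, where $\tilde{X}$ projects to a vector field on $\mathbbmss{R}^{4|4}$ and $f\in C^{\infty}(M)$. Contracting $\omega$ with $X$ produces, up to the $e^{\lambda}$ factor, a piece $f\,\beta$ (from $i_{\partial/\partial\lambda}(d\lambda\wedge\beta)$) plus a piece $-(i_{\tilde{X}}\beta)\,d\lambda + i_{\tilde{X}}(d\beta)$. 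The $d\lambda$-component forces $i_{\tilde{X}}\beta = 0$, i.e.\ $\tilde{X}\in\ker(\alpha) = \D$ by part~(1) of Theorem~3.\ref{theorem1}; then the remaining $\beta$-component forces $f=0$ (using $\beta\neq 0$); and finally, with $\tilde{X}\in\D$, the surviving term $i_{\tilde{X}}(d\beta) = \pi^{*}(i_{\tilde{X}}d\alpha)$ vanishes only if $\tilde{X}=0$ by part~(2) of Theorem~3.\ref{theorem1}. Hence $X=0$.

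The main obstacle I anticipate is bookkeeping with signs and parities rather than any conceptual difficulty. Because $\omega$ is a \emph{vector-valued} even two-form and the coordinate directions carry mixed Grassmann parities, I must be careful that the interior-product identities $i_{X}(d\lambda\wedge\beta) = (i_{X}d\lambda)\beta \pm d\lambda\,(i_{X}\beta)$ carry the correct Koszul signs, and that contracting into the tangent-bundle-valued coefficient is handled as in the Appendix. The cleanest way to control this is to keep the three tensorial pieces $d\lambda\wedge\beta$, $d\beta$ separate and argue degree by degree in $d\lambda$, so that the orthogonality of the $d\lambda$-component and the $\beta$-component lets me peel off $f$ and $\tilde{X}$ independently, reducing everything back to the already-established non-degeneracy of $d\alpha$ on $\D$. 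I expect no surprises beyond verifying these sign conventions.
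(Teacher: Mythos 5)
Your proof is correct, and at the top level it mirrors the paper: expand $\omega = e^{\lambda}(d\lambda\,\pi^{*}\alpha + \pi^{*}d\alpha)$ by the Leibniz rule, observe that closedness is free since $\omega$ is exact, and reduce non-degeneracy to the established properties of $\alpha$. The implementation of the non-degeneracy step, however, is genuinely different. The paper invokes Proposition~3.\ref{prop3} to get the splitting $TM = \ker(\alpha)\oplus\ker(d\alpha)\oplus\mathbbmss{R}$ and then argues block-wise: $\pi^{*}\alpha$ is non-degenerate on $\ker(d\alpha)$, $\pi^{*}(d\alpha)$ is non-degenerate on $\ker(\alpha)$, and $d\lambda$ takes care of the $\mathbbmss{R}$-direction. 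You instead bypass Proposition~3.\ref{prop3} (and hence the Reeb vector fields of Proposition~3.\ref{prop2}) entirely: you split off only the $\lambda$-direction, $X = f\,\partial/\partial\lambda + \tilde{X}$, and extract vanishing conditions component by component, using both parts of Theorem~3.\ref{theorem1} directly. What your route buys is explicitness: each conclusion ($\tilde{X}\in\D$, then $f=0$, then $\tilde{X}=0$) is forced by a visible component of $i_{X}\omega$, with no appeal to a prior direct-sum decomposition and no worry about cross-terms between summands, which the paper's block argument quietly glosses over. What the paper's route buys is brevity and a conceptual picture of $\omega$ as block-diagonal against the three-way splitting. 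One point in your write-up to tighten: grading by $d\lambda$-degree separates $d\lambda\,(i_{\tilde{X}}\beta)$ from $f\beta + i_{\tilde{X}}(d\beta)$, but it does not by itself separate $f\beta$ from $i_{\tilde{X}}(d\beta)$, so "$\beta\neq 0$ forces $f=0$" is not quite enough as stated. The clean fix is to note that $d\alpha = 2i\,(d\theta^{a}(\sigma^{\mu})_{a}^{\:\:\dot{b}}d\bar{\theta}_{\dot{b}})\,\partial/\partial x^{\mu}$ involves only the odd differentials, so $i_{\tilde{X}}(d\beta)$ has no $dx^{\mu}$-component, whereas $f\beta$ contributes $f\,dx^{\mu}\,\partial/\partial x^{\mu}$; comparing $dx^{\mu}$-coefficients then gives $f=0$, after which Theorem~3.\ref{theorem1}(2) kills $\tilde{X}$. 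With that one observation made explicit, your argument is complete.
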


\begin{proof}
Using the Leibniz rule we have $\omega = e^{\lambda}\left( d \lambda \: \pi^{*}\alpha + \pi^{*}d \alpha \right)$.  From the construction it is clear that the vector-valued two-form in question is Grassmann even and closed. The only question is the non-degeneracy property. First note that we have the decomposition

\begin{equation}\nonumber
TM = \ker(\alpha) \oplus \ker(d \alpha) \oplus \mathbbmss{R},
\end{equation}
which follows from  Proposition(3.\ref{prop3}).  Then consider $d \lambda \: \pi^{*}\alpha + \pi^{*}d \alpha$. Notice that $\pi^{*}\alpha$ is zero on $\ker(\alpha)$, but non-degenerate on $\ker(d \alpha)$. Similarly, $\pi^{*}(d \alpha)$ is zero on $\ker(d \alpha)$, but non-degenerate on $\ker(\alpha)$. The differential $d \lambda$ is non-degenerate on $\mathbbmss{R}$. Furthermore the factor $e^{\lambda}$ is strictly positive and thus the vector-valued two-form $\omega = d \left( e^{\lambda} \pi^{*}\alpha \right)$ is non-degenerate on $M$.
\end{proof}

Explicitly in local coordinates this polysymplectic form is given by
\begin{equation} \nonumber
 \omega = e^{\lambda}\left(  d\lambda \: dx^{\mu} + i d \lambda \left( \theta^{a}(\sigma^{\mu})_{a}^{\:\: \dot{b}}d \bar{\theta}_{\dot{b}} + d \theta^{a} (\sigma^{\mu})_{a}^{\:\: \dot{b}} \bar{\theta}_{\dot{b}} \right)  + 2 i\: d \theta^{a} (\sigma^{\mu})_{a}^{\:\: \dot{b}} d \bar{\theta}_{\dot{b}} \right)\frac{\partial}{\partial x^{\mu}}.
\end{equation}

The above theorem then allows one to associate in a canonical way a polysymplectic structure with $\mathcal{N}=1$ supersymmetry, but at the cost of adding another even (bosonic) degree of freedom to the superspace.

\begin{remark}
One can also consider the natural analogue of the \emph{symplectic cone}. Consider the space $N=\mathbbmss{R}^{4|4}\otimes (0,\infty)$ and the vector-valued two-form $\varpi= d \left( r^{2}\: \alpha \right)$. Here $r$ is the coordinate on the second factor. By the same reasoning as above this form is non-degenerate and thus a polysymplectic form. The multiplicative group $(0, \infty)$ acts on $N$ via $\Phi_{t}: N \rightarrow N$ (for some fixed $t$)  via $\Phi^{*}_{t} r' = t \: r$. It is easy to see that the action of this group on polysymplectic form is $\Phi_{t}^{*}\varpi = t^{2} \varpi$, that is via a dilation. The polysymplectic supermanifold $(N, \varpi)$ is the \emph{polysymplectic cone} associated with the SUSY polycontact structure on $\mathbbmss{R}^{4|4}$.
\end{remark}

\newpage

\section{Concluding remarks}\label{sec:conclusion}
We have shown that  $\mathcal{N}=1$ supersymmetry as formulated in  flat superspace can be, at least at the classical level,   interpreted in terms of a \emph{higher order contact-like structure}. In particular
\begin{enumerate}
\item The SUSY polycontact form is  the piece of the Maurer--Cartan form associated with the
Lorentz subgroup of the super-Poincar\'e group.
\item The SUSY covariant derivatives $\mathbbmss{D}_{a}$  and $\bar{\mathbbmss{D}}^{\dot{a}} \in \Vect(\mathbbmss{R}^{4|4})$ are understood to be a basis for the distribution associated with the SUSY polycontact structure.
    \item The $\mathcal{N}=1$ SUSY algebra is understood in terms of the Lie algebra of ``strict polycontact vector fields" of the SUSY polycontact structure.
\end{enumerate}

It is not obvious what should play the role of Hamiltonian functions in relation with the SUSY polycontact structure. Taking polysymplectic geometry as our cue one would expect there to be some notion of a ``Hamiltonian section", which is of course a vector field itself. However, it is not clear how to formulate this as a  na\"{\i}ve   replacement of a function with a vector field in the definitions associated with contact geometry is nonsensical.\\

The physical relevance of this work is not clear at present. For instance it would be nice to understand if specific actions or even partition functions have some novel geometric  interpretation within the framework of polycontact structures. The study of concrete actions in this context  remains unexplored. Similarly, the physical significance of the associated polysymplectic structure is not at present clear. \\

Generalising to higher or lower dimensional space-times and extended supersymmetries appears straightforward. However, the superspace methods for extended supersymmetry are less developed and become clumsy.  Typically to get at irreducible representations one employs on-shell constraints; we have the problem of auxiliary fields to deal with. Understanding off-shell supersymmetry is more involved and  no complete general theory exists. Indeed the problem of off-shell irreducible representations, that is without direct reference to an action and the related equations of motions, has largely been over looked.\\

We have not in anyway attempted to understand the general theory of polycontact forms on supermanifolds. Polysymplectic geometry was initially developed by G$\ddot{\textnormal{u}}$nther \cite{Gunther1987} to describe a Hamiltonian approach to first order classical field theories. This notion has since been refined to k-symplectic geometry by Awane \cite{Awane1992} and to k-cosymplectic by Le$\acute{\textnormal{o}}$n and collaborators \cite{Leon1998}. There is a closely related notion of multisymplectic geometry, see Gotay, Isenberg \&  Marsden \cite{Gotay2004} for an overview. All these generalisations of classical symplectic geometry have their origin in covariant Hamiltonian formulations of classical field theory. The theory of higher contact-like structures is far less developed, though we must point out the work of van Erp \cite{Erp2011} as this is close to the notions we put forward here. \\

To the authors' knowledge  there has been few  works generalising these structures to supermanifolds. For example, Hrabak developed a multisymplectic approach to the BRST symmetry \cite{Hrabak1999a,Hrabak1999b}, based on Kanatchikov's formalism \cite{Kanatchikov1996}. BV quantisation was studied by Bashkirov \cite{Bashkirov2004}.   Understanding how to formulate supersymmetric field theory or superstring theory in a multisympelctic formalism is missing from the literature.\\

For sure understanding the general theory of vector-valued contact-like structures and their relation to vector-valued symplectic-like structures  will require refinements of the ideas put forward  in this paper.  What ever happens, these initial results  suggests a tantalising and intriguing  link between supersymmetry and \emph{higher contact-like  structures}.  \\

\noindent \textbf{Acknowledgements} \\
The author would like to thank E. van Erp, J.M. Figueroa--O'Farrill and  R.J. Szabo for enlightening email conversations.

\appendix
\section*{Appendix}\label{appendix}
\subsection*{Vector-valued forms on supermanifolds}
In this appendix we present the bare minimum required to understand vector-valued differential forms on supermanifolds as required in the main text. For  a good review of classical vector-valued differential forms and operations upon them  consult Kol\'{a}\v{r}, Michor and Slov\'{a}k \cite{Kolar1993}.\\

Consider a supermanifold $M$, equipped with local coordinates $(x^{A})$, where $\widetilde{x^{A}} = \widetilde{A} \in \mathbbmss{Z}_{2}$. Changes of local coordinates are of the form $\overline{x}^{A} = \overline{x}^{A}(x)$, employing the standard abuses of notation. In essence changes of coordinates on a supermanifold can be written in the same way as coordinate changes on a manifold, the key difference is that one must respect the $\mathbbmss{Z}_{2}$ grading.   We define \emph{differential (pseudo)forms} on a  supermanifold to be functions on the total space of the antitangent bundle $\Pi TM$. Here $``\Pi"$ is the parity reversion functor, it shifts the parity of the fibre coordinates of a vector bundle while leaving the base coordinates unchanged. This definition of differential forms is more than adequate for our purposes. \\

The antitangent bundle is then equipped with natural local coordinates $(x^{A}, dx^{A})$, where the fiber coordinate has the parity $\widetilde{dx^{A}} = \widetilde{A}+1$. The changes of local coordinate induce vector bundle automorphisms of the form

\begin{equation}\nonumber
\overline{x}^{A} = \overline{x}^{A}(x),  \hspace{30pt}\textnormal{and}\hspace{30pt} \overline{dx}^{A} = dx^{B}\left(\frac{\partial \overline{x}^{A}}{\partial x^{B}} \right).
\end{equation}

Pseudoforms need not be polynomial in  the ``differentials" associated with odd coordinates nor is their a notion of a top form. Pseudoforms are not the general objects of integration on supermanifolds.   \\

We employ a local description to define vector-valued differential forms. These in fact could be pseudoforms, but we will not make an issue over this distinction.  In natural local coordinates a vector-valued differential form is given by

\begin{equation}
\Omega = \Omega^{A}(x,dx)  \frac{\partial}{\partial x^{A}}.
\end{equation}

The partial derivative transforms as

\begin{equation}\nonumber
\frac{\partial}{\partial \overline{x}^{A}} = \left(\frac{\partial x^{B}}{\partial \overline{x}^{A}}\right)\frac{\partial}{\partial x^{B}}.
\end{equation}

Importantly, the notion of the exterior derivative,  interior derivative and Lie derivative generalise from acting of differential forms to acting on vector-valued forms. We define the action of the exterior derivative as

\begin{equation}
d \Omega = \left( dx^{B} \frac{\partial \Omega^{A}}{\partial x^{B}}\right)\frac{\partial}{\partial x^{A}}.
\end{equation}

It is straightforward to see that the exterior derivative is a \emph{differential}: $d^{2}=0$. For any vector field $X \in \Vect(M)$ we define the interior derivative as

\begin{equation}
i_{X}\Omega = \left((-1)^{\widetilde{X}} X^{B}\frac{\partial \Omega^{A}}{\partial dx^{B}}\right)\frac{\partial}{\partial x^{A}}.
\end{equation}

The Lie derivative extends via the derivation rule,

\begin{eqnarray}
L_{X}\Omega &=& \left( d(i_{X}\Omega^{A}) - (-1)^{\widetilde{X}+1}i_{X}(d\Omega^{A}) - (-1)^{\widetilde{X}\widetilde{\Omega}}\Omega^{B}\frac{\partial X^{A}}{\partial x^{B}}  \right) \frac{\partial}{\partial x^{A}}\\
\nonumber &=& \left( (-1)^{\widetilde{X}}  dx^{B}\left(  \frac{\partial X^{C}}{\partial x^{B}}\right)\frac{\partial \Omega^{A}}{\partial dx^{C}}+ X^{B}\frac{\partial \Omega^{A}}{\partial x^{B}}  - (-1)^{\widetilde{X}\widetilde{\Omega}}\Omega^{B}\frac{\partial X^{A}}{\partial x^{B}} \right)\frac{\partial}{\partial x^{A}}.
\end{eqnarray}

\end{document}